\documentclass[twocolumn,superscriptaddress,aps,showpacs,amsmath,amstex,amssymb,citeautoscript,floatfix,pra,preprintnumbers]{revtex4-2}
\pdfoutput=1
\usepackage[english]{babel}

\usepackage{letltxmacro}
\usepackage{latexsym}
\usepackage{booktabs}
\usepackage{siunitx}
\usepackage{multirow}
\usepackage{algpseudocode}
\usepackage{bbm}

\LetLtxMacro{\ORIGselectlanguage}{\selectlanguage}
\makeatletter
\DeclareRobustCommand{\selectlanguage}[1]{%
  \@ifundefined{alias@\string#1}
    {\ORIGselectlanguage{#1}}
    {\begingroup\edef\x{\endgroup
       \noexpand\ORIGselectlanguage{\@nameuse{alias@#1}}}\x}%
}
\newcommand{\definelanguagealias}[2]{%
  \@namedef{alias@#1}{#2}%
}
\makeatother

\definelanguagealias{en}{english}
\definelanguagealias{English}{english}
\usepackage{graphicx}
\usepackage{amsmath}
\usepackage{amsfonts}
\usepackage{amssymb}
\usepackage{amsthm}
\usepackage{cancel}
\usepackage{mathtools}
\usepackage{bm}
\usepackage{color}
\usepackage[percent]{overpic}
\usepackage{soul}
\usepackage{wasysym}
\usepackage{dsfont}
\usepackage{float}
\usepackage{physics}
\usepackage{hyperref}
\usepackage{comment}
\usepackage{enumitem}
\usepackage{textgreek}
\usepackage{tcolorbox}
\usepackage[dvipsnames]{xcolor}

\newcounter{alg}

\hypersetup{
  colorlinks   = true,
  urlcolor     = blue,
  linkcolor    = blue,
  citecolor    = red
}
\usepackage[mathscr]{euscript}

\usepackage{varwidth}
\usepackage[lined,boxed,ruled,norelsize,linesnumbered]{algorithm2e}
\usepackage{verbatim}
\usepackage[normalem]{ulem}
\usepackage{cleveref}
\usepackage{times}

\newtheorem{theorem}{Theorem}

\newtheorem{lemma}{Lemma}

\newtheorem{proposition}{Proposition}

\newtheorem*{lemma*}{Lemma}
\newtheorem{corollary}{Corollary}
\newtheorem*{theorem*}{Theorem}

\theoremstyle{plain}

\theoremstyle{plain}

\theoremstyle{plain}
\newtheorem*{lem*}{\protect\lemmaname}
\theoremstyle{plain}
\newtheorem*{thm*}{\protect\theoremname}
\theoremstyle{plain}

\theoremstyle{plain}

\renewcommand{\thealg}{\arabic{alg}}
\newtcolorbox[use counter=alg,
              crefname={algorithm}{algorithms},
              Crefname={Algorithm}{Algorithms}]
{alg}[2][]{%
  floatplacement=#1,
  float,
  colback=cyan!5!white,
  colframe=cyan!50!black,
  colbacktitle=cyan!85!black,
  fonttitle=\bfseries,
  title=Algorithm~\thealg: #2
}

\newcommand{\eqs}[1]{\begin{equation}\begin{split}#1\end{split}\end{equation}}

\newcommand{\corrauthors}{%
\href{mailto:hongyehu@iphy.ac.cn}{hongyehu@iphy.ac.cn}, \href{mailto:txiang@iphy.ac.cn}{txiang@iphy.ac.cn},
\href{mailto:chenzhao@eitech.edu.cn}{chenzhao@eitech.edu.cn}
}
\begin{document}

\title{
Fast and accurate neural triage decoding of ultra-high-rate quantum LDPC codes
}

\author{Hong-Ye Hu}
\thanks{\corrauthors}
\affiliation{Institute of Physics, Chinese Academy of Sciences, Beijing 100910, China}

\author{Zhenlai Wang}
\affiliation{Institute of Physics, Chinese Academy of Sciences, Beijing 100910, China}

\author{Binzhao Luo}
\affiliation{Institute of Physics, Chinese Academy of Sciences, Beijing 100910, China}

\author{Tao Xiang}
\thanks{\corrauthors}
\affiliation{Institute of Physics, Chinese Academy of Sciences, Beijing 100910, China}
\affiliation{School of Physical Sciences, University of Chinese Academy of Sciences, Beijing 100049, China}

\author{Chen Zhao}
\thanks{\corrauthors}
\affiliation{School of Mathematical Sciences, Eastern Institute of Technology, Ningbo, Zhejiang 315200, China}

\begin{abstract}
Quantum error correction is essential for fault-tolerant quantum computing. High-rate quantum low-density parity-check (qLDPC) codes have emerged as a promising approach to reducing the substantial physical qubit overhead of quantum error correction, enabling encoding rates exceeding $1/2$ in blocks of approximately one thousand physical data qubits. However, their practical implementation requires decoders that combine high accuracy with low latency to keep pace with quantum hardware. Existing decoders of high-rate qLDPC codes can achieve high accuracy but often incur substantial computational latency, whereas neural network decoders have yet to be demonstrated for high-rate qLDPC codes. Here, we introduce \emph{neural triage decoding}, a framework that uses syndrome consistency to adaptively allocate neural computation. A local neural message-passing network rapidly predicts error corrections, additional iterations reduce residual syndrome violations, and a rarely invoked diversification stage searches for consistent corrections to the remaining shots. We demonstrate that the \texttt{Triage} decoder can decode ultra-high-rate qLDPC codes encoding a few hundred logical qubits, including lifted-product and Kasai codes, achieving state-of-the-art accuracy comparable to the near optimal algorithmic decoders while reducing decoding latency by approximately one to two orders of magnitude. Our framework is inspired by a rigorous analysis of the exact bitwise marginal decoder, revealing a distance-independent syndrome-inconsistent error floor. In addition, we show the robustness of neural network decoders to mismatched error priors. Our results establish adaptive neural decoding as a promising route towards fast, accurate, and scalable quantum error correction in the near term.

\end{abstract}

\maketitle
\raggedbottom

\let\realaddcontentsline\addcontentsline
\renewcommand{\addcontentsline}[3]{}

\section{Introduction}
\label{sec:introduction}

Quantum error correction (QEC), which protects quantum information through redundant encoding, is central to building a fault-tolerant quantum computer~\cite{Shor1995Scheme,Steane1996ErrorCorrection,gottesman2009introduction}. Recent experiments have demonstrated error suppression and key components of QEC on superconducting processors~\cite{Krinner2022Repeated,Google2023Suppressing,Google2025BelowThreshold}, neutral-atom platforms~\cite{Bluvstein2024Logical,Bluvstein2026Architecture}, and trapped-ion systems~\cite{Egan2021FaultTolerant,RyanAnderson2021RealTime,Postler2022FaultTolerant}, marking substantial progress towards fault-tolerant quantum computation. Scaling these advances to practical quantum algorithms~\cite{Shor1997Factoring,Lloyd1996Simulation,AspuruGuzik2005Chemistry,Grover1997Search,Harrow2009LinearSystems} requires sufficiently low logical error rates at a manageable physical-qubit cost~\cite{Gidney2021Factoring,Gidney2025Factoring,Webster2026Pinnacle,Beverland2022Assessing,Cain2026Shor}. Recent developments in high-rate quantum low-density parity-check (qLDPC) codes offer a route to reducing this cost by encoding many logical qubits within a single block while retaining sparse parity checks~\cite{Breuckmann2021QLDPC,Gottesman2014ConstantOverhead,Xu2024ConstantOverhead,Bhardwaj2026HighRateProcessors,Tillich2014PositiveRate,panteleev2022asymptotically,Leverrier2022Tanner,Bravyi2024HighThreshold,Kasai2026Orthogonality}. In particular, codes with encoding rates exceeding one half can accommodate hundreds of logical qubits in blocks of approximately one thousand physical data qubits, substantially reducing the encoding overhead~\cite{Zhao2026UltraHighRate,Kasai2026Orthogonality,Lee2026DesignPrinciples,Yang2026GALA,Lu2026UltraLowOverhead,Okada2026PairPartition}. Realizing this advantage also requires a fast and accurate classical decoder that infers corrections to logical observables from measured syndromes. Decoding accuracy determines how effectively the code suppresses logical errors, while sustained throughput must keep pace with syndrome generation and latency must meet the timing requirements of logical feedback~\cite{Fowler2012Processing,Terhal2015Memories,Google2025BelowThreshold}.

Exact decoding is computationally challenging, and finding an optimal correction is NP-hard even for general classical codes~\cite{Berlekamp1978Intractability}. Quantum degeneracy adds a further computational challenge, making optimal decoding of stabilizer codes $\#$P-complete~\cite{Iyer2015Hardness}. Practical decoders must therefore balance accuracy, speed, and versatility. Belief propagation (BP) is a natural choice for qLDPC codes because it exploits their sparse parity checks, supports parallel computation, and achieves near-optimal performance for classical LDPC codes~\cite{Gallager1962LDPC,Kschischang2001FactorGraphs,Chung2001Shannon}. In quantum codes, however, degeneracy and short cycles can prevent BP from converging, leaving corrections that do not reproduce the measured syndrome~\cite{MacKay2004SparseGraph,poulin2008iterative,Maan2026Correlated}. These difficulties have motivated BP variants such as relay-BP~\cite{Muller2025RelayBP}, as well as post-processing methods that supplement BP with ordered statistics decoding (OSD)~\cite{Fossorier1995OrderedStatistics,panteleev2021degenerate,Roffe2020Landscape}, localized statistics decoding (LSD)~\cite{Hillmann2025Localized}, or search-based correction~\cite{Zhao2026UltraHighRate,Beni2025Tesseract,Ye2026BeamSearch,Cain2024Correlated}. Such approaches can substantially improve accuracy and, in some settings, achieve near-optimal decoding performance~\cite{Zhao2026UltraHighRate}. Their computational cost, particularly for most-likely-error (MLE) search, nevertheless remains a barrier to the high throughput required for real-time decoding.

\begin{figure*}[htbp]
    \centering
    \includegraphics[width=0.99\linewidth]{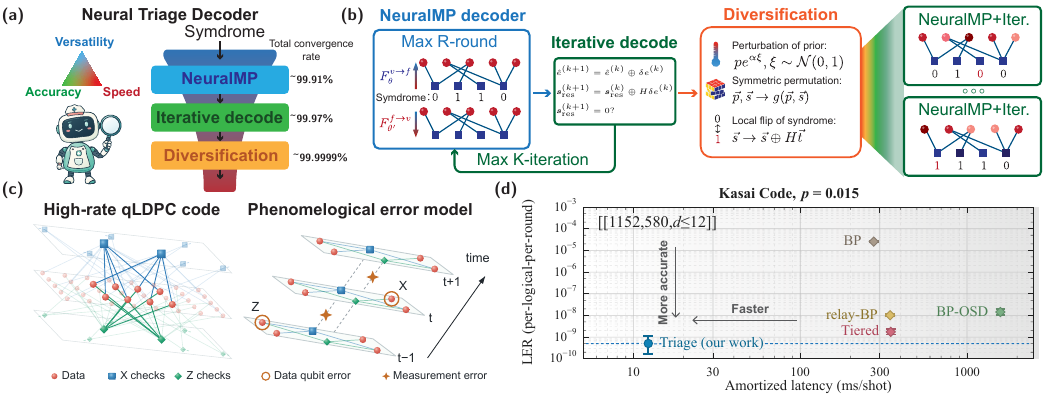}
    \caption{\textbf{Neural triage decoding for high-rate quantum LDPC codes.}
(\textbf{a}) The \texttt{Triage} decoder combines a fast NeuralMP local decoder with iterative decoding strategy. For rare cases when proposed errors cannot resolve all syndrome, it enters diversification stage. The \texttt{Triage} decoder allocate computation resources adaptively to achieve high accuracy, high speed and being versatile simultaneously.
(\textbf{b}) NeuralMP performs $R$ internal MP-rounds per evaluation. Then iterative decoding repeatedly applies the network to the residual syndrome for up to $K$ outer iterations. Remaining shots enter diversification stage, which uses prior perturbations, symmetric permutations and syndrome flips to explore alternative correction trajectories with the same trained network.
(\textbf{c}) High-rate qLDPC codes are represented by data qubits (red circles), $X$ checks (blue squares) and $Z$ checks (green diamonds). Under the phenomenological error model, data qubits experience Pauli $X$, $Y$ or $Z$ errors each per measurement round for multiple measurement rounds, while measurement outcomes are flipped with probability $p_{\mathrm m}$.
(\textbf{d}) Schematic comparison of decoding latency and logical error rate for neural and algorithmic decoders. Logical error rates are expressed on a per-qubit-per-round basis; the arrow indicates the desired direction of improvement.}
\label{fig:figure1}
\end{figure*}

Neural networks offer another route to the versatile, accurate and fast decoder envisioned in \Cref{fig:figure1}(a)~\cite{Torlai2017NeuralDecoder,maskara2019advantages,Senior2025AlphaQubit2}. They can learn directly from experimental data and adapt to device-specific noise~\cite{Baireuther2018Correlated,bausch2024learning,Lange2025GNN}. Autoregressive and transformer models can capture global correlations and approach optimal decoding, but often at substantial computational cost~\cite{cao2023qecgpt,bausch2024learning,BonillaAtaides2026Universal}. On the other hand, architectures such as Cascade neural decoders exploit code symmetries to achieve fast and accurate decoding for specific code families~\cite{gu2026scalable} and the performance for high-rate qLDPC codes is unclear. Graph neural networks (GNNs) offer broader applicability by using Tanner graphs as a common representation while retaining the sparse, parallel message passing of BP~\cite{liu2019neural,gong2023graph,ninkovic2024decoding,Maan2025Astra}. However, like BP, they can fail to produce syndrome-consistent corrections~\cite{gong2023graph}; whether they can match the strongest algorithmic decoders across different code families remains unclear. Although neural qLDPC decoders have been demonstrated on blocks encoding tens of logical qubits~\cite{ninkovic2024decoding,Maan2025Astra}, their performance on ultra-high-rate codes encoding hundreds of logical qubits per block remains unestablished. This regime demands accurate inference over a vast space of logical outcomes: a block encoding 500 logical qubits has $2^{500}\approx10^{150}$ logical classes in a single code block.

In this work, we introduce \emph{Neural Triage Decoding} (\texttt{Triage}), a versatile, fast, and accurate neural decoder that extends the idea of a local decoder through iterative decoding and adaptive search for syndrome-consistent corrections (see \Cref{fig:figure1} (a)). Its core graph-based message-passing decoder (\emph{NeuralMP}) predicts physical corrections through local updates on the sparse Tanner graph of a detector error model (DEM). Within each message-passing round (MP-round), updates to nodes of the same type can be evaluated in parallel, making neural inference well suited to GPU execution. Then it outputs a marginal likelihood of error on each error node. We also show the limitations of bitwise marginal decoding and propose \emph{iterative decoding} through virtual updates on the syndromes, shown as the second step in \Cref{fig:figure1} (b). If a syndrome compatible solution is still not reached, we invoke the last stage of \emph{diversification}, which combines syndrome flips~\cite{wang2025fullyparallelized}, perturbation of the prior distribution~\cite{poulin2008iterative}, and symmetric permutation~\cite{Koutsioumpas2025AutDEC}. The diversification reuses the same trained neural network as a general strategy to resolve syndrome-error compatibility without retraining.

We evaluate \texttt{Triage}'s performance on the high-rate lifted-product (LP) code $[[1122,148,d\leq 20]]$ (LP20)~\cite{Cain2026Shor} and ultra-high-rate Kasai code $[[1152,580,d\leq 12]]$~\cite{Kasai2026Orthogonality,Zhao2026UltraHighRate} with phenomenological error models across $r$ noisy measurement rounds, with a final perfect syndrome extraction. These two codes are representative of product-based and non-product-based constructions of high-rate qLDPC codes, respectively. While being versatile, \texttt{Triage} achieves state-of-the-art (SOTA) decoding accuracy that is comparable to or better than that of algorithmic decoders, such as BP-OSD and BP-relay-MLE (Tiered)~\cite{Zhao2026UltraHighRate}, achieving a per-qubit-per-round error rate of around $5\times10^{-10}$ at $p=1.5\%$ physical error rate on the Kasai code  (\Cref{fig:figure1}d). Alongside this accuracy, \texttt{Triage} reduces decoding latency and increases throughput by factors of $10$-$100$ relative to algorithmic decoders (\Cref{fig:figure1}d). \texttt{Triage} thus brings versatility, accuracy, and speed together in a common framework for decoding ultra-high-rate qLDPC codes. With further hardware optimization, this framework offers a path towards real-time error correction of hundreds of logical qubits within a single code block.

\section{Results}
\label{sec:results}

\subsection*{Local marginal decoder and iterative decoding}
\label{subsec:local_decoder}

\begin{figure*}[htbp]
    \centering
    \includegraphics[width=0.99\linewidth]{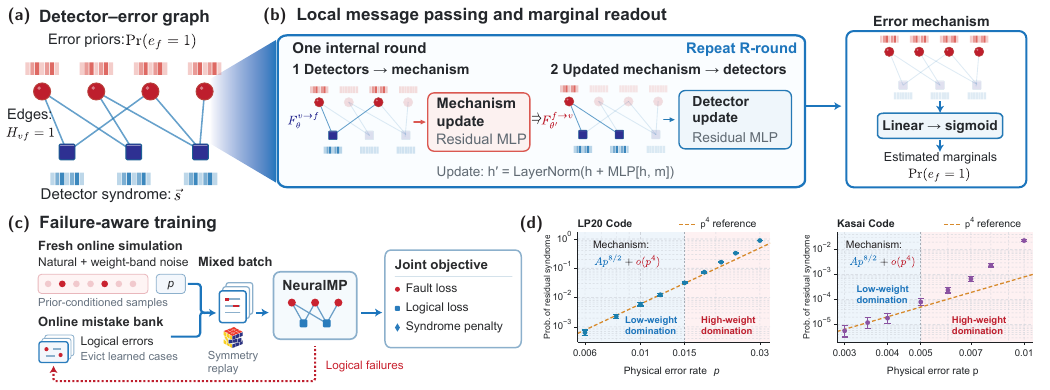}
    \caption{\textbf{NeuralMP architecture, failure-aware training and residual-syndrome scaling.}
(\textbf{a}) The Tanner graph of the DEM connects fault mechanisms (red circles) to detectors (blue squares) through $H_{vf}=1$. Red colored bars represent node states encoding mechanism priors together with positions embedding and blue colored bars represent detector outcomes embedding.
(\textbf{b}) Each MP-round first updates fault states using neighbouring detector states, then updates detector states using the updated fault states. Aggregated messages enter residual multilayer perceptrons (MLPs) with layer normalization. After $R$ MP-rounds, a linear readout followed by a sigmoid produces one marginal estimate per mechanism.
(\textbf{c}) Training combines fresh online simulations, including error-weight-conditioned samples, with symmetry-augmented replay from an evolving mistake bank. Logical failures enter the bank, while repeatedly corrected examples are removed. The joint objective supervises fault predictions, logical observables and syndrome consistency.
(\textbf{d}) Probability of a nonzero residual syndrome after a single pass of NeuralMP versus physical error rate $p$ for the LP20 and Kasai codes. Orange dashed lines indicate $p^4=p^{w_{\min}/2}$ reference scaling with $w_{\min}$ as minimal weight of space-time degeneracy. Pale blue and pale red regions highlight regimes dominated by low-weight and high-weight errors, respectively. Error bars show 95\% confidence intervals.}
\label{fig:figure2}
\end{figure*}
We formulate decoding using a detector error model with $F$ fault mechanisms and $m$ detectors. A fault configuration $\vec{e}$ produces a syndrome $\vec{s}=H\vec{e}$ and logical-observable flips $\boldsymbol{\ell}=L\vec{e}$, with arithmetic over $\mathrm{GF}(2)$. NeuralMP predicts one score per fault mechanism by passing messages on the sparse bipartite graph defined by $H$~\cite{ninkovic2024decoding}, which is also called the Tanner graph. Syndrome values $s_v$ and mechanism priors $p_f$ are encoded as $\chi$-dimensional detector and fault states, shown as colored bars in \Cref{fig:figure2}(a), with learned coordinate embeddings supplying positional information. Each of $R$ MP-rounds first updates fault states using detector states, then updates detector states using the updated fault states (\Cref{fig:figure2}(b)). Trainable functions $F_\theta^{v\to f}$ and $F_{\theta'}^{f\to v}$ combine each node's current state with the sum of its incoming messages. Parameters are shared across nodes of each type within an MP-round, enabling parallel updates, but differ between MP-rounds. A linear readout of the final fault states gives logits $z_f$ and learned fault scores $q_f=\sigma(z_f)$, where $\sigma(x)=(1+e^{-x})^{-1}$ is the sigmoid function. Crucially, for $N=m+F$ nodes and $E$ edges, one evaluation costs $\mathcal{O}(R[E\chi+N\chi^2])$, which is linear in $N$ when $E=\mathcal{O}(N)$ at fixed $R,\chi$. In contrast, the global decoders that use attention mechanism will have costs that scale as $\mathcal{O}(N^2)$~\cite{vaswani2017attention}.

During training, the \texttt{Triage} decoder readily learns to correct common error patterns, but rare and difficult decoding cases are the important ones determining its performance in the low error rate regime. To enhance the sparse training reward, we couple online error-weight-conditioned simulation with a dynamically updated mistake bank (\Cref{fig:figure2}(c)). We store hard decoding examples in the online mistake bank and remove them from the bank after repeated successes. We also augment replayed examples using code symmetries, jointly permuting fault patterns and syndromes. This failure-aware loop combines targeted data generation with the supervision of fault predictions, logical observables, and syndrome consistency. Implementation details are provided in the Methods.

Independent decisions from marginal probabilities can violate syndrome constraints even when the marginals are \emph{exact}. To illustrate the ambiguity, consider independent $Z$ noise on $n$ qubits of a Calderbank--Shor--Steane (CSS) code and an even-weight $Z$ stabilizer $g$ of weight $w$. Errors $e_A$ and $e_B=e_A\oplus g$ occupying complementary halves of its support both have weight $w/2$. Since $Hg=0$ and $Lg=0$, they share the same syndrome and logical labels. Either specified error, with no other errors, occurs with probability $p^{w/2}(1-p)^{n-w/2}=\Theta(p^{w/2})$. If this pair gives the only minimum-weight explanations of its syndrome, each supported qubit has an exact posterior marginal approaching $1/2$ as $p\to0$. Such local ambiguities can arise from stabilizers whose weight does not grow with code distance. In phenomenological noise models, space--time degeneracies involving both data and measurement errors can have lower weight than purely spatial stabilizers.

\begin{theorem}[Syndrome inconsistency under phenomenological noise]
\label{thm:marginal-residual}
Consider a CSS code with Z-check matrix $H_Z$ of minimum qubit degree $\nu\geq2$, with no 4-cycles in its Tanner graph and no nonzero kernel vector of weight below $2\nu+2$. In the independent $X/Z$ phenomenological model defined in Methods, with $r\geq5$ noisy measurement rounds, data error rate $p$, and measurement error rate $q=\Theta(p)$, the exact bitwise marginal decoder, using maximum-a-posteriori (MAP) decisions, satisfies
\begin{equation}
\begin{aligned}
P_{\mathrm{inc}}(p):=\Pr(H\hat e\ne He)&=\Omega(p^{\kappa(\nu)})\qquad(p\to0),\\
\kappa(\nu)&=\begin{cases}\nu/2+1,&\nu\text{ even},\\\nu+1,&\nu\text{ odd}.\end{cases}
\end{aligned}
\label{eq:marginal_residual_scaling}
\end{equation}
\end{theorem}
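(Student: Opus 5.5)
The plan is to exhibit, for each parity of $\nu$, an explicit space--time error pattern $e_A$ of weight $\kappa(\nu)$ whose syndrome has exactly two dominant explanations, $e_A$ and $e_B=e_A\oplus g$, with $g$ a space--time degeneracy (an element of $\ker H$ with $Lg=0$) and $|e_A|=|e_B|=\kappa(\nu)$. Both $e_A$ and $e_B$ must also carry the same number of data faults and the same number of measurement faults, so that their prior weights $p^{a}q^{b}$ coincide exactly. Given such a pair, I will show that the MAP bitwise decision on $\operatorname{supp}(g)$ is never syndrome-consistent. Then $P_{\mathrm{inc}}\ge \Pr(e=e_A)=\Theta(p^{\kappa})$, using $q=\Theta(p)$.

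Fix a data qubit $j$ of degree $\nu$ with checks $c_1,\dots,c_\nu$, and an interior round $t$; this is possible because $r\ge5$, which leaves room away from both time boundaries. In the phenomenological detector model, the following basic degeneracy has weight $\nu+2$:
\[
g_1=\{X_j\text{ at }t,\;X_j\text{ at }t+1\}\cup\{\text{measurement faults on }c_1,\dots,c_\nu\text{ at round }t\}.
\]
It is equivalent to delaying the data fault by one round. For $\nu$ even, I split $g_1$ into $e_A=\{X_j^{(t)}\}\cup\{m_{c_1},\dots,m_{c_{\nu/2}}\}$ and $e_B=\{X_j^{(t+1)}\}\cup\{m_{c_{\nu/2+1}},\dots,m_{c_\nu}\}$. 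Each half has one data fault and $\nu/2$ measurement faults, hence weight $\nu/2+1$. For $\nu$ odd no balanced split of $g_1$ exists, so I use the weight-$(2\nu+2)$ degeneracy $g_2$: the data faults $X_j^{(t)},X_j^{(t+2)}$ together with all $\nu$ measurement faults at rounds $t$ and $t+1$. I split it as $e_A=\{X_j^{(t)}\}\cup\{\text{meas. at }t+1\}$ and $e_B=\{X_j^{(t+2)}\}\cup\{\text{meas. at }t\}$, each of weight $\nu+1$. I will check directly on the detector grid that these halves produce identical syndromes.

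The decision step uses the two dominant explanations together with the remainder. Conditioned on $s=He_A$, the posterior equals $\tfrac12$ on each of $e_A,e_B$ up to $O(p)$ corrections. Hence every bit of $\operatorname{supp}(g)$ has marginal $\tfrac12+\delta_i$, where $\delta_i=O(p)$. This does not require exact ties. On each of $\operatorname{supp}(e_A)$ and $\operatorname{supp}(e_B)$ the MAP rule outputs some pattern, and every possible outcome fails:
\begin{itemize}
\item If every bit is set to $0$, then $H\hat e=0\neq s$.
\item If every bit is set to $1$, then $\hat e=g$ and $H\hat e=0\ne s$.
\item If the output is any other mixture, it is not in $\{e_A,e_B\}$ modulo $\ker H$, because $g$ is the only short kernel vector there.
\end{itemize}
In every case, up to events of probability $o(p^\kappa)$, we get $H\hat e\neq He$. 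Bits outside $\operatorname{supp}(g)$ have marginals $O(p)$ and are decided $0$.

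The main obstacle is proving that $e_A,e_B$ really are the only explanations of weight $\le\kappa$, and that no further explanations of weight exactly $\kappa$ skew the marginals. Here I would use the hypotheses. The absence of 4-cycles means two distinct data qubits share at most one check. The bound that nonzero spatial kernel vectors have weight at least $2\nu+2$ excludes purely spatial alternatives. With these two facts I will argue that any explanation of $s$ must contain a data fault on $j$ in the relevant time window, because replacing it with other data qubits or with measurement faults costs at least $\nu/2$, or $\nu$, extra faults. I will then enumerate the few competing time placements, checking each against the detector constraints. The odd case is the delicate one. There, single-round shifts such as $X_j^{(t+1)}$ with mixed measurement faults could compete at weight $\nu+1$. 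I will have to show that they either change the syndrome or appear symmetrically in both halves. If they appear symmetrically, they preserve the $\tfrac12$ balance and the inconsistency argument still goes through.
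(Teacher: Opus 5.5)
Your decision step has a genuine gap, and it is where the theorem's real content lies. Among your ``other mixtures'' is the pattern that sets every bit of $\operatorname{supp}(e_A)$ to $1$ and every bit of $\operatorname{supp}(e_B)$ to $0$. That pattern \emph{is} $e_A$, and it satisfies $H\hat e=s$; the same holds for $e_B$. Knowing that $e_A,e_B$ are the only compatible patterns on $\operatorname{supp}(g)$ shows only that MAP is inconsistent \emph{unless} it returns one of them.

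Every marginal on $\operatorname{supp}(g)$ is $\tfrac12+O(p)$. So the hard decisions are fixed by the signs of the $O(p)$ shifts, which come from higher-weight explanations and can reproduce $e_A$ or $e_B$. Your remark that symmetric weight-$\kappa$ competitors ``preserve the $\tfrac12$ balance'' therefore misses the point: the leading-order balance is automatic, and the subleading asymmetry decides.

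The paper's proof is devoted to exactly this. It tracks the two data endpoints $i\in e_A$, $j\in e_B$, which must have odd parity in any compatible pattern on $U$. Using $\pi_i+\pi_j=1-\Delta/Z_*$ and $\pi_i-\pi_j=\delta/Z_*$, it shows that $|\Delta|>|\delta|$ forces equal decisions on $i,j$. It then establishes that inequality by classifying compatible configurations through weight $\kappa+\nu$, keeping data and measurement counts separate because $p\neq q$ in general. Two cases need further ideas:
\begin{itemize}
\item When $N_A\neq N_B$, $\delta$ need not be dominated by $\Delta$ (for even $\nu$ both are of order $p^{\kappa+\nu}$) and can favour the endpoint pattern of $a$. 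A separate measurement-bit bias is then used to exclude $a$.
\item For odd $\nu$, the leading term $xy^{\nu-1}(y^2-x^2)$ of $\Delta$ can cancel against the $xy^{2\nu}$ term. The paper therefore uses both a boundary and an interior placement and shows one of them works at each $p$. A single interior placement, as in your plan, would not suffice.
\end{itemize}

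Your odd-$\nu$ witness also fails outright. Because $g_1^{(t+1)}\in\ker H$, you have $e_A=X_j^{(t)}\oplus X_j^{(t+1)}\oplus X_j^{(t+2)}\oplus g_1^{(t+1)}$. So $s=He_A$ is the set of all neighbouring detectors in three consecutive rounds, and it has the weight-$3$ explanation $X_j^{(t)}X_j^{(t+1)}X_j^{(t+2)}$. For $\nu\ge3$ this beats $\nu+1$, and under the theorem's hypotheses it is the unique minimum-weight explanation. The posterior concentrates on it, and MAP decodes this syndrome consistently.

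You need a staggered split such as the paper's. Partition the checks into $\mathcal A,\{c\},\mathcal B$ and take
$a=d_0+\sum_{h\in\mathcal A}(m_{h,0}+m_{h,1})+m_{c,1}$ and $b=d_2+\sum_{h\in\mathcal B}(m_{h,0}+m_{h,1})+m_{c,0}$.
Then three consecutive data faults need $\nu-1$ further measurement faults, so they enter only as a weight-$(\nu+2)$, class-$11$ correction to $\Delta$. Your even-$\nu$ split is a valid minimum-weight witness: it is the paper's construction translated into the interior.
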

The LP and Kasai code shots used here have $\nu=3$, so Theorem~\ref{thm:marginal-residual} yields $P_{\mathrm{inc}}(p)=\Omega(p^4)$ for their independent $X/Z$ phenomenological models. Single-pass NeuralMP results are consistent with this mechanism (\Cref{fig:figure2}(d)). The proof and finite-instance certificates are given in the Methods and the Supplementary Information.

Motivated by this understanding, we propose the idea of \emph{iterative decoding}, which addresses residual violations by repeatedly applying the same trained network to the current residual syndrome (\Cref{fig:figure1}(b)). Starting from $\hat e^{(0)}=0$ and $\boldsymbol{s}_{\mathrm{res}}^{(0)}=s$, each outer iteration $k$ samples a proposal $\delta e^{(k)}$ and updates
\begin{align}
\hat e^{(k+1)}&=\hat e^{(k)}\oplus\delta e^{(k)},\nonumber\\
\boldsymbol{s}_{\mathrm{res}}^{(k+1)}&=\boldsymbol{s}_{\mathrm{res}}^{(k)}\oplus H\delta e^{(k)}.
\label{eq:ntd_iterative_updates}
\end{align}
Decoding stops when the residual vanishes or reaches the maximum $K_{\text{max}}$ iterations. Practically, we set $K_{\text{max}}=4$. Shots unresolved after $K_{\text{max}}$ outer iterations proceed to the final stage of diversification.

\subsection*{Improved convergence with diversification}
\label{sec:results_2}

Residual-guided iteration can be complemented by diversification, which explores alternative correction trajectories under different inference conditions. The relay-BP decoder follows the same principle by varying local memory strengths to diversify BP trajectories~\cite{Muller2025RelayBP}. For a fixed neural decoder, we generate alternative correction candidates through prior perturbations, symmetric permutations, and syndrome flips, using information from the decoder to guide the search (see \Cref{fig:figure1} (b)). In particular, near-zero logit outputs identify uncertain error mechanisms, while repeated corrections at the same locations indicate instability along an iterative trajectory.

\begin{figure}[htbp]
    \centering
    \includegraphics[width=0.99\linewidth]{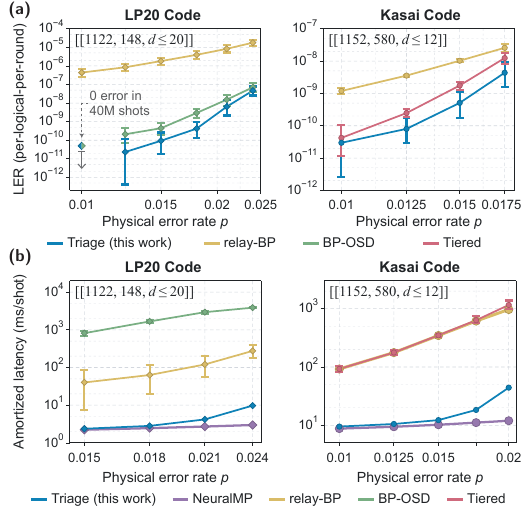}
    \caption{\textbf{Decoding accuracy and amortized latency on LP20 and Kasai codes.}
(\textbf{a}) Logical error rate (LER) per-qubit-per-round versus physical error rate $p$, using approximately $0.04$--$40.96$ million shots per point for LP20 code and $0.24$--$20.48$ million for Kasai code. Error bars quantify finite-sampling uncertainty using 95\% Clopper--Pearson intervals.
(\textbf{b}) Amortized decoding latency. We benchmark the latency using NVIDIA H200 GPU. NeuralMP (\texttt{Triage}'s frontend) reports the asymptotic frontend cost, whereas \texttt{Triage} shows the composed full-decoder cost at batch size 64, including handoff and diversification. CPU baseline timing uses 1,000--5,000 shots per point. Timing error bars, where shown, are 95\% bootstrap confidence intervals from 2,000 resamples and indicate uncertainty in the estimated mean latency. CPU baselines use one thread, except the eight-thread COPT stage~\cite{Ge2023COPT}.
}
    \label{fig:figure3}
\end{figure}

Each diversification trajectory starts from the original syndrome and reuses the iterative decoder under modified inference conditions. Prior perturbations modify the mechanism priors supplied to the network, shifting its preferences among competing correction proposals. Symmetric permutations preserve the Tanner graph of the DEM and mechanism priors but change the alignment of the syndrome with the network's coordinate encodings, allowing different predictions that are mapped back to the original coordinates. For syndrome flips, we select a small set of mechanisms, represented by a binary vector $\vec{t}$, using marginal uncertainty and repeated corrections as guides. We apply the iterative decoder to the modified syndrome $\vec{s}\oplus H\vec{t}$ with the original priors, then combine its prediction $\vec{e'}$ with $\vec{t}$ to obtain the candidate $\hat e=\vec{t}\oplus \vec{e'}$. The resulting candidates are checked against the original syndrome, and syndrome-consistent candidates are ranked by their negative log-likelihood under the original mechanism priors.

The effect of a syndrome flip can be seen in the half-support issue described above, where two complementary errors $\vec{a}$ and $\vec{b}$ each have weight $w/2$ and produce the same syndrome. Choosing $\vec{t}$ to contain a single qubit in the support of $\vec{a}$ maps these errors to $\vec{a}\oplus \vec{t}$ and $\vec{b}\oplus \vec{t}$, both consistent with the modified syndrome but with weights $w/2-1$ and $w/2+1$, respectively. Under the unchanged independent-error prior with $p<1/2$, the former is more likely by a factor $[(1-p)/p]^2$. In the two-candidate example, this imbalance moves the marginals away from $1/2$ and favors a consistent choice, which is then mapped back to a correction for the original syndrome. The flip thus changes the preference between candidates during search without changing their relative likelihood in the original decoding problem.

We organize these diversification strategies into a conditional decoding procedure that allocates additional computation to shots with unresolved syndromes. Shots whose residual syndrome vanishes during the initial iterative stage return immediately, and the remainder enter a syndrome-flip stage guided by marginal uncertainty and repeated corrections. If it produces syndrome-consistent candidates, we return the best candidate according to the selection rule above. Shots with no such candidate can proceed to a supplementary search stage which has further sign flip strategy and symmetric permutations combined with prior perturbations. Predictions from different symmetry-related views are mapped back to the original coordinates, and their disagreement identifies locations for additional syndrome flips. Candidates from this stage are evaluated by the same rule. If the search budget is exhausted without a syndrome-consistent candidate, the decoder retains the correction returned by the initial iterative stage, and the shot remains unresolved.

\subsection*{Decoding high-rate qLDPC codes}
\label{sec:results_3}
We evaluate \texttt{Triage}'s performance on the lifted-product code $[[1122,148,\leq20]]$ and the ultra-high-rate Kasai code $[[1152,580,d\leq12]]$, spanning encoding rates of $13.2\%$ and $50.3\%$, respectively. Both benchmarks use phenomenological noise, with data errors and multiple noisy syndrome measurements followed by a final perfect syndrome extraction. For the LP20 code, we simulate five noisy measurement rounds, with independent $X$ and $Z$ errors, each occurring with probability $p$ on every data qubit, and measurement outcomes flipped with probability $p$. For the Kasai code, we simulate eight noisy measurement rounds under depolarizing noise, with $X$, $Y$, and $Z$ errors, each occurring with a probability of $p/3$, and measurement outcomes flipped with a probability of $4p/3$. For both codes, diversification begins with syndrome flips. For LP20, unresolved shots proceed to an additional stage combining prior perturbations, symmetric permutations, and further syndrome flips guided by disagreement between symmetry-related predictions; Kasai uses only the initial syndrome-flip stage. Crucially, for both LP20 and Kasai codes, we train \texttt{Triage} at a single physical error rate, $p=2\%$, and evaluate its performance across a range of previously unseen error rates.

We observe that \texttt{Triage} decoder achieves low logical error rates while substantially reducing amortized decoding costs. For example, at $p=0.015$, its estimated per-qubit-per-round LER is $9.33\times10^{-11}$ on LP20 and $5.08\times10^{-10}$ on Kasai, compared with $4.43\times10^{-10}$ for BP-OSD and $1.76\times10^{-9}$ for the BP-relay-MLE Tiered decoder, respectively (see \Cref{fig:figure3} (a)). These accuracies are obtained through iterative NeuralMP decoding with diversification. At the same error rate, the estimated full-decoder costs at batch size 64 are $2.37$~ms and $12.3$~ms per shot on a single NVIDIA H-series GPU, compared with CPU baseline means of $815$~ms and $349$~ms for BP-OSD and the Tiered decoder, respectively (see \Cref{fig:figure3} (b)). These benchmark settings yield approximately $343\times$ and $28\times$ speedups.

We further examine its robustness to inaccurate noise calibration by perturbing the priors supplied to the decoder while keeping the physical noise fixed. This is important as the noise distribution can shift from the initial noise learning, and control drifts can also happen during the computation \cite{Sivak2026Reinforcement,Gong2026SelfCalibrating}. We perturb the prior error uniform distribution as $p_f\rightarrow p_f(1+\sigma_r\mathcal{N}(0,1))$, where $\mathcal{N}(0,1)$ is the Gaussian distribution with zero mean and unit variance. We investigate the ratio between the whole block logical error rate with the perturbed prior distribution over that of the original prior distribution. In \Cref{fig:figure4}, we show that the four-iteration \texttt{Triage} frontend (NeuralMP) maintains a block-LER ratio close to unity across the tested perturbations. At the largest nominal mismatch strength, $\sigma_r=0.5$, its ratios are approximately $1.01$ for the LP20 code and $0.98$ for the Kasai code, whereas the corresponding algorithmic baselines show increases of approximately $1.6$--$2.1$ times relative to matched-prior decoding. These results suggest that neural-network-based decoders can offer passive resilience to uncertainty and fluctuations in noise parameters, analogous to robust quantum controls \cite{doi:10.1126/sciadv.adr0875,2025arXiv250819075H,PhysRevLett.132.193801}.

\begin{figure}
    \centering
    \includegraphics[width=0.99\linewidth]{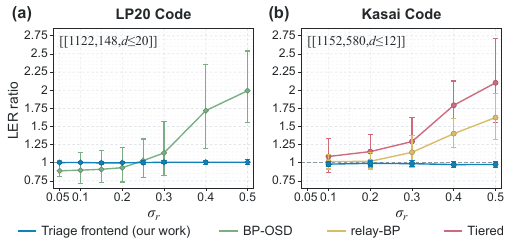}
    \caption{\textbf{Robustness to mismatched error priors.}
Block-LER ratio relative to correctly specified priors versus nominal mismatch strength $\sigma_r$ for (\textbf{a}) LP20 code at $p=0.024$, and (\textbf{b}) Kasai code at $p=0.0175$. NeuralMP uses four decoding iterations. Each prior realization is evaluated on $9\times10^5$ shots for LP20 code and $6\times10^5$ shots for Kasai code, sharing the same noise samples with the matched-prior baseline. Two prior realizations are used per $\sigma_r$, except LP20 at $\sigma_r=0.1$ and $0.3$, which use four. Markers show geometric means of the per-realization LER ratios. Error bars span the per-realization ratio intervals obtained by transforming conditional 95\% Clopper--Pearson intervals.
}
    \label{fig:figure4}
\end{figure}

\section{Discussion}
\label{sec:discussion}
\texttt{Triage} decoder combines a local neural message-passing network with iterative decoding and a rarely invoked diversification to allocate computation according to decoding difficulty. This architecture is inspired by the theoretical understanding of the limitations of the exact bitwise marginal decoder, where there exists a lower bound on the residual syndrome density even when the marginal distribution is exact. The results on lifted-product and Kasai codes show that \texttt{Triage} can achieve SOTA decoding accuracy comparable to and better than that of the algorithmic decoders, with substantially lower amortized decoding costs than the tested algorithmic baselines. In addition, we show the robustness of the \texttt{Triage} decoder against the error prior distribution, which is important and relevant for real-time decoding with hardware fluctuations. These results also suggest several directions for improving the decoder and connecting it to experimental operation.

Symmetric permutations currently enter \texttt{Triage} through diversification. Incorporating these symmetries directly into NeuralMP, for example, through symmetry-respecting parameter sharing as suggested in Cascade~\cite{gu2026scalable}, could reduce network size and inference cost. The different effectiveness of diversification proposals across the two codes also motivates a more systematic analysis of decoding failures. Relating the response to syndrome flips, prior perturbations, and symmetric permutations to graph structure and degeneracy could reveal which ambiguities each proposal resolves and guide the choice of proposal and computational budget for each shot.

These diagnostics could support recursive self-improvement (RSI) of \texttt{Triage}. Automated discovery of programs and algorithms with language models~\cite{RomeraParedes2024FunSearch,Novikov2025AlphaEvolve}, together with the discovery of learning rules through meta-learning~\cite{Oh2025DiscoRL}, provides precedents for using machine-generated proposals and quantitative feedback to improve computational methods. In this setting, AI agents could analyze decoding failures, propose changes to the architecture, training procedure, or diversification strategy, and evaluate successive revisions against logical error rates and total decoding costs. This feedback loop could reduce manual tuning while retaining explicit accuracy and runtime criteria \cite{meta_physical_design}.

Finally, adaptive decoding and hardware self-calibration could form a coupled feedback loop. Neural decoders can be adapted to experimental data~\cite{bausch2024learning}, while recent experimental and theoretical work shows how syndrome-derived signals can guide control optimization during error correction~\cite{Sivak2026Reinforcement,Gong2026SelfCalibrating}. \texttt{Triage} could update mechanism priors or selectively retrain NeuralMP as the noise changes, while detection-event statistics guide the recalibration of the underlying controls. The observed resilience of the four-iteration frontend to the tested prior perturbations provides motivation for studying this integration. Establishing its ability to track physical drift would require evaluating the full decoder and calibration loop together, including the cost and stability of online updates.

Extending \texttt{Triage} from phenomenological quantum memory to circuit-level logical gadgets and algorithms~\cite{Cohen2022CodeSurgery,Zhou2026ParallelAlgebraicSurgery,Yoder2025TourDeGross,He2025Extractors,Bluvstein2024Logical,Cain2024Correlated,Zhou2025TransversalFaultTolerance}, or applying it to experimental data, are central next steps. Real-time operation also requires measuring single-shot latency, its distribution, and sustained throughput against the timing requirements of the hardware~\cite{Google2025BelowThreshold,Senior2025AlphaQubit2}. Further work is needed to optimize GPU inference and investigate field-programmable gate array (FPGA) deployment of the trained model~\cite{NVIDIACUDAQNeuralDecoders,Cicero2026GNNAccelerator}. The present GPU and CPU comparisons quantify amortized costs under their stated batching and hardware conditions. These further tests would determine whether adaptive neural decoding can sustain both accurate logical protection and efficient feedback during prolonged operation.

\section{Methods}
\label{sec:methods}

\subsection{NeuralMP architecture}
\label{sec:methods_1}
\label{methods:neuralmp}

The detector error model specifies $F$ mechanisms, $m$ detectors, and $k$ logical observables through $H\in\{0,1\}^{m\times F}$, $L\in\{0,1\}^{k\times F}$, and priors $p_f$, with $N=m+F$ nodes in a bipartite Tanner graph. Error configurations differing by a vector in $\ker H\cap\ker L$ share both syndrome and logical labels. Initial states combine learned node-type and coordinate features; fault states additionally receive a linear projection of $\log p_f$, and detector states receive the encoded binary outcomes. Coordinate features concatenate learned embeddings for individual nodes, followed by a linear projection and a rectified linear unit (ReLU). Messages include a learned source-type bias before summation, and detector aggregation uses the newly updated fault states. Each directional update takes the current state and aggregate as the input to a two-layer MLP with a ReLU hidden activation. Its output is added to the current state and layer-normalized. Distinct update modules are used for the two directions and for successive MP-rounds, with each module applied in parallel across its node family. The deployed Kasai network uses hidden dimension $\chi=64$ and $R=80$ MP-rounds with distinct MP-round-specific parameters.

Fresh training examples mix natural noise with noise conditioned on selected fault weights. Thresholded predictions with incorrect logical observables enter the replay bank. Verified symmetries transform replayed fault patterns and syndromes together, with logical labels recomputed. The three loss terms are defined in the Training objective subsection. Class weighting and modified sampling distributions mean that the scores are not assumed to be calibrated deployment posteriors.

Each outer iteration is an independent forward pass: hidden states are re-initialized from the current residual syndrome and the original priors, and no hidden state or logit is carried over between iterations. Information propagates between iterations only through the residual syndrome. Proposal bits are sampled independently as $\delta e_f^{(k)}\sim\operatorname{Bernoulli}[\sigma(z_f^{(k)}/T)]$ and accumulated according to \Cref{eq:ntd_iterative_updates}. Here $T>0$ is the sampling temperature. The deployed configuration uses $T=1$, so $q_f^{(k)}=\sigma(z_f^{(k)})$, and a budget of at most $K=4$ network evaluations per shot before escalation to diversification. Parameters remain fixed throughout inference.

\subsection{Training objective}

We train NeuralMP using physical-error-level, logical, and syndrome supervision. Let $q_f=\sigma(z_f)$ denote the network output for fault mechanism $f$, and let $e$, $s=He$ and $\boldsymbol{\ell}=Le$ denote the simulated fault pattern, syndrome and logical-observable flips, respectively. All parity operations are over $\mathrm{GF}(2)$. The total loss is
\eqs{
\mathcal{L}
=\mathcal{L}_{\mathrm{fault}}
+\mathcal{L}_{\mathrm{logical}}
+\mathcal{L}_{\mathrm{syndrome}},
}
with unit coefficients for all three terms.

The fault loss is binary cross-entropy (BCE), with a positive-class weight of eight to compensate for sparse faults:
\eqs{
\mathcal{L}_{\mathrm{fault}}
=-\frac{1}{F}\sum_f
\left[8e_f\log q_f+(1-e_f)\log(1-q_f)\right].
}
For logical supervision, we construct a differentiable parity estimate under a factorized Bernoulli approximation,
\eqs{
\widetilde{\ell}_j
=\frac{1-\prod_{f:L_{jf}=1}(1-2q_f)}{2},\\
\mathcal{L}_{\mathrm{logical}}
=\frac{1}{k}\sum_j
\mathrm{BCE}(\widetilde{\ell}_j,\ell_j)
}

The syndrome-consistency penalty similarly compares predicted fault parities with the measured detectors:
\eqs{
\mathcal{L}_{\mathrm{syndrome}}
=\frac{1}{m}\sum_c
\left[
1-(-1)^{s_c}
\prod_{f:H_{cf}=1}(1-2q_f)
\right].
}
Here, $F$, $k$, and $m$ are the numbers of fault mechanisms, logical observables, and detectors. These terms encourage accurate fault predictions, correct logical outcomes and syndrome consistency, respectively; exact consistency is checked separately during decoding.

\subsection{Marginal decoding and syndrome inconsistency}
\label{sec:methods_2}
We first establish a criterion for marginal decisions to violate a syndrome constraint. We then derive this criterion from the spatial graph and phenomenological noise assumptions of Theorem~\ref{thm:marginal-residual}, and verify those assumptions for the simulated shots. Throughout this subsection, $H\in\mathbb F_2^{m\times F}$ is fixed and finite, and the error mechanisms are independent with probabilities $p_f(p)=\Theta(p)$ as $p\to0$. The exact posterior marginals are $\pi_f(s)=\Pr(e_f=1\mid He=s)$. Bitwise MAP sets $\hat e_f(s)=\boldsymbol{1}\{\pi_f(s)>1/2\}$, assigning ties to zero. All matrix operations are over $\mathbb F_2$. We write $P_{\mathrm{inc}}(p)=\Pr(H\hat e\ne He)$.

\begin{lemma}[Concentration on minimum-weight explanations]
\label{lem:methods-minimum-concentration}
For a nonzero attainable syndrome $s_*$, let $\kappa$ be its minimum compatible error weight, $\mathcal M_*$ the set of all compatible weight-$\kappa$ errors, and $U_*$ the union of their supports. Then
\begin{equation}
\begin{aligned}
Z_*:=\Pr(He=s_*)&=\Theta(p^\kappa),\\
\Pr(e\notin\mathcal M_*\mid He=s_*)&=O(p).
\end{aligned}
\label{eq:methods-minimum-concentration}
\end{equation}
In particular, $\pi_f(s_*)=O(p)$ for every $f\notin U_*$.
\end{lemma}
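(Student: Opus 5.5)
The argument is a finite enumeration by weight. Since $H$ is fixed and finite, the number of mechanisms $F$ does not depend on $p$. Every probability is a finite sum of products of the $p_f$ and $(1-p_f)$, and $p_f(p)=\Theta(p)$ throughout.

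\emph{Weight of a single configuration.} For any configuration $e$ of weight $|e|=w$,
\[
\Pr(e)=\prod_{f:e_f=1}p_f\prod_{f:e_f=0}(1-p_f)=\Theta(p^{w}).
\]
The first product is $\Theta(p^w)$ because each factor is $\Theta(p)$. The second product tends to $1$ because there are at most $F$ factors, each tending to $1$.

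\emph{The normalizer $Z_*$.} Decompose $Z_*=\sum_{e:He=s_*}\Pr(e)$ by weight. Since $s_*$ is attainable, the set $\mathcal M_*$ is nonempty. Its contribution is $|\mathcal M_*|\cdot\Theta(p^\kappa)$, with $1\le|\mathcal M_*|\le\binom{F}{\kappa}$ a constant, so it is $\Theta(p^\kappa)$. By definition of $\kappa$, every other compatible error has weight at least $\kappa+1$. There are at most $2^F$ such errors, so their total contribution is $O(p^{\kappa+1})$. Adding the two pieces gives $Z_*=\Theta(p^\kappa)$. The lower bound uses the positivity of all terms; only the upper bound needs the finite count.

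\emph{Concentration.} From the same decomposition,
\[
\Pr(e\notin\mathcal M_*\mid He=s_*)=\frac{O(p^{\kappa+1})}{\Theta(p^\kappa)}=O(p).
\]

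\emph{Marginals off $U_*$.} Take $f\notin U_*$. Every $e$ with $e_f=1$ has $f$ in its support, so $e\notin\mathcal M_*$. Hence the event $\{e_f=1\}$ is contained in $\{e\notin\mathcal M_*\}$, and
\[
\pi_f(s_*)\le\Pr(e\notin\mathcal M_*\mid He=s_*)=O(p).
\]

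There is no real obstacle here. The only point needing care is that all implicit constants depend only on the fixed $H$ (through $F$ and $\kappa$) and on the $\Theta$-constants of the $p_f$, and not on $p$. This uniformity is what allows the $O(p^{\kappa+1})$ tail bound over at most $2^F$ terms.
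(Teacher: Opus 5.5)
Your proof is correct and follows the same route as the paper. Both arguments bound each weight-$w$ configuration by $\Theta(p^w)$, use finiteness of $H$ to get $\Theta(p^\kappa)$ from $\mathcal M_*$ and $O(p^{\kappa+1})$ from all other compatible errors, divide by $Z_*$, and observe that $e_f=1$ with $f\notin U_*$ forces $e\notin\mathcal M_*$; your explicit containment $\{e_f=1\}\subseteq\{e\notin\mathcal M_*\}$ and remarks on the uniformity of constants just spell out steps the paper states tersely.
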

\begin{proof}
Independence and $p_f(p)=\Theta(p)$ imply that each fixed weight-$t$ error has probability $\Theta(p^t)$. Since $H$ is finite, the total probability of the minimum-weight explanations is $\Theta(p^\kappa)$, while that of all higher-weight compatible errors is $O(p^{\kappa+1})$. Dividing by $Z_*$ proves posterior concentration. An error on a coordinate outside $U_*$ can occur only in a higher-weight explanation, giving the marginal bound.
\end{proof}
For an equally likely disjoint pair $\mathcal M_*=\{a,b\}$, this lemma gives $\pi_f(s_*)=1/2+O(p)$ on $U_*$. The higher-weight explanations determine the signs of these small shifts, which control the hard decisions.

\begin{lemma}[Two-bit marginal shifts]
\label{lem:methods-endpoint-shifts}
Fix a syndrome $s_*$ with $Z_*>0$ and distinct bits $i,j$. Define $W_{uv}(p)=\Pr(He=s_*,e_i=u,e_j=v)$ for $u,v\in\{0,1\}$, $\Delta=W_{00}-W_{11}$, and $\delta=W_{10}-W_{01}$. Then
\begin{equation}
\begin{aligned}
\pi_i(s_*)-\tfrac12&=\frac{-\Delta+\delta}{2Z_*},\\
\pi_j(s_*)-\tfrac12&=\frac{-\Delta-\delta}{2Z_*}.
\end{aligned}
\label{eq:methods-marginal-shifts}
\end{equation}
The two marginals lie strictly on the same side of $1/2$ if and only if $|\Delta|>|\delta|$. In particular, $\Delta>|\delta|$ places both below $1/2$.
\end{lemma}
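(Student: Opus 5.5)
The statement to prove is Lemma~\ref{lem:methods-endpoint-shifts}. The plan is a direct computation: decompose the posterior marginals of bits $i$ and $j$ according to the four joint values of $(e_i,e_j)$. Conditioning on $He=s_*$, the event $\{He=s_*\}$ is the disjoint union of the four events $\{He=s_*,e_i=u,e_j=v\}$. This gives
\[
Z_*=W_{00}+W_{01}+W_{10}+W_{11},
\]
\[
\pi_i(s_*)=\frac{W_{10}+W_{11}}{Z_*},\qquad
\pi_j(s_*)=\frac{W_{01}+W_{11}}{Z_*}.
\]
Both marginals are well defined because $Z_*>0$ by assumption.

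Next I subtract $1/2=(W_{00}+W_{01}+W_{10}+W_{11})/(2Z_*)$ from each marginal. For bit $i$ the numerator becomes
\[
2W_{10}+2W_{11}-W_{00}-W_{01}-W_{10}-W_{11}=(W_{11}-W_{00})+(W_{10}-W_{01})=-\Delta+\delta,
\]
which is the first identity in Eq.~\eqref{eq:methods-marginal-shifts}. Exchanging the roles of $i$ and $j$ swaps $W_{10}$ with $W_{01}$, which sends $\delta\mapsto-\delta$ and leaves $\Delta$ unchanged. This gives the second identity.

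For the sign statement, write $A=-\Delta$. The two shifts are then $(A+\delta)/(2Z_*)$ and $(A-\delta)/(2Z_*)$. Since $2Z_*>0$, both marginals lie strictly on the same side of $1/2$ exactly when $(A+\delta)(A-\delta)=A^2-\delta^2>0$, that is, when $|\Delta|>|\delta|$. The two directions of the equivalence are as follows.
\begin{itemize}
\item If $|A|>|\delta|$, then $A\pm\delta$ both have the sign of $A$, which is nonzero.
\item Conversely, if $|A|\le|\delta|$, then either one of $A\pm\delta$ vanishes, or they have opposite signs, because $A+\delta$ and $A-\delta$ straddle $A$ at distance $|\delta|\ge|A|$.
\end{itemize}
Finally, if $\Delta>|\delta|$, then $A=-\Delta<0$ and $|A|>|\delta|$, so both shifts are negative and both marginals lie strictly below $1/2$.

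I expect no real obstacle here; the argument is elementary bookkeeping. The one point to handle carefully is the boundary case, where a shift of exactly zero must not count as "strictly on one side." The sign analysis above covers this, and it matters because ties are assigned to zero under the bitwise MAP rule.
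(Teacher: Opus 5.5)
Your proposal is correct and follows essentially the same route as the paper: express $\pi_i,\pi_j$ and $Z_*$ through the four $W_{uv}$, subtract $1/2$ to get the two identities, and read off the sign condition from the product $(\Delta^2-\delta^2)/(4Z_*^2)$. Your explicit treatment of the boundary case, where one shift vanishes, is the same fact the paper uses implicitly: a product is strictly positive exactly when both factors are nonzero with the same sign.
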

\begin{proof}
Use $\pi_i=(W_{10}+W_{11})/Z_*$, $\pi_j=(W_{01}+W_{11})/Z_*$, and $Z_*=\sum_{u,v}W_{uv}$. Subtracting $1/2$ gives the identities. Their product is $(\Delta^2-\delta^2)/(4Z_*^2)$, which is positive exactly when the shifts have the same nonzero sign.
\end{proof}
\begin{proposition}[A two-bit inconsistency criterion]
\label{prop:methods-marginal-criterion}
Under the noise assumptions of this subsection, suppose a nonzero syndrome $s_*$ has exactly two minimum-weight explanations $a,b$ of weight $\kappa$ with disjoint supports, and set $U=\operatorname{supp}(a\oplus b)$. Suppose every compatible error supported on $U$ obeys $e_i\oplus e_j=1$ for fixed distinct bits $i,j\in U$. If $|\Delta|>|\delta|$ for these bits at all sufficiently small $p>0$, then exact bitwise MAP has $P_{\mathrm{inc}}(p)=\Omega(p^\kappa)$.
\end{proposition}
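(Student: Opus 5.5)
The plan is to show that whenever the syndrome $s_*$ occurs, exact bitwise MAP returns a correction that violates it. Since $\Pr(He=s_*)=Z_*=\Theta(p^\kappa)$ by Lemma~\ref{lem:methods-minimum-concentration}, this immediately gives $P_{\mathrm{inc}}(p)\ge Z_*=\Omega(p^\kappa)$. Bitwise MAP is a deterministic function of the syndrome, so the whole task reduces to showing that the single vector $\hat e(s_*)$ satisfies $H\hat e(s_*)\ne s_*$ for all sufficiently small $p$.

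The first step is to pin down $\hat e(s_*)$ outside $U$. Because $a$ and $b$ have disjoint supports, $U=\operatorname{supp}(a\oplus b)=\operatorname{supp}(a)\cup\operatorname{supp}(b)$, which is exactly the set $U_*$ of Lemma~\ref{lem:methods-minimum-concentration}. That lemma gives $\pi_f(s_*)=O(p)<1/2$ for every $f\notin U$ once $p$ is small, so $\hat e_f(s_*)=0$ off $U$. Hence $\hat e(s_*)$ is supported on $U$. Here finiteness of $H$ gives a uniform threshold on $p$ over the finitely many coordinates.

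The second step handles the two special bits $i,j\in U$. By Lemma~\ref{lem:methods-endpoint-shifts}, the hypothesis $|\Delta|>|\delta|$ places $\pi_i(s_*)$ and $\pi_j(s_*)$ strictly on the same side of $1/2$. So either both exceed $1/2$ or both fall below it, and ties cannot occur at these bits. Bitwise MAP therefore sets $\hat e_i(s_*)=\hat e_j(s_*)$, giving $\hat e_i\oplus\hat e_j=0$.

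The final step is the contradiction. Suppose $H\hat e(s_*)=s_*$. Then $\hat e(s_*)$ is a compatible error supported on $U$, and the hypothesis forces $\hat e_i\oplus\hat e_j=1$, which contradicts the second step. Hence $H\hat e(s_*)\ne s_*$ on the event $\{He=s_*\}$, and the bound follows.

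The step needing the most care is the first: ensuring that coordinates outside $U$ are decided as $0$ for all small $p$ simultaneously. This needs the uniform $O(p)$ bound, and that bound uses only $s_*\neq0$ and finiteness. Checking $|\Delta|>|\delta|$ itself is assumed here, so the real difficulty is deferred to verifying it in the theorem.
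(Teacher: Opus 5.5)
Your proposal is correct and follows essentially the same route as the paper: you use concentration to confine the MAP output to $U$, then Lemma~\ref{lem:methods-endpoint-shifts} to force equal decisions on $i,j$, which contradicts the odd-parity constraint, so $P_{\mathrm{inc}}\ge Z_*=\Theta(p^\kappa)$. You also state explicitly a step the paper leaves implicit, namely that disjointness and exactly two minimum-weight explanations give $U=U_*$.
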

\begin{proof}
By Lemma~\ref{lem:methods-minimum-concentration}, $Z_*=\Theta(p^\kappa)$ and all marginals outside $U$ are $O(p)$. Hence the MAP output is supported on $U$ for sufficiently small $p$. Lemma~\ref{lem:methods-endpoint-shifts} gives equal MAP decisions on $i,j$, violating the required odd parity. Thus MAP is inconsistent whenever $s_*$ is observed, and $P_{\mathrm{inc}}(p)\geq Z_*=\Theta(p^\kappa)$, independently of ties at other coordinates.
\end{proof}
The quantities $\Delta$ and $\delta$ separate a common shift of the two endpoint marginals from a preference between them: $\pi_i+\pi_j=1-\Delta/Z_*$ and $\pi_i-\pi_j=\delta/Z_*$. The phenomenological construction below gives these two effects a concrete space--time interpretation.

Theorem~\ref{thm:marginal-residual} assumes mutually independent $X/Z$ data faults (probability $p$ each) and measurement flips ($q=\Theta(p)$), with known initial syndromes and a perfect final readout after $r$ noisy measurement rounds. The code and $r$ are fixed as $p\to0$.

Choose a degree-$\nu$ qubit $v_*$. A data fault on this qubit can be moved by $\tau$ measurement rounds without changing the syndrome if a string of measurement faults joins the two times on each neighboring check. Writing $\vec d_t$ for its $X$ fault in measurement round $t$ and $\vec m_{h,t}$ for a measurement fault connecting detectors in measurement rounds $t,t+1$ on check $h$, this relation is
\[
\vec g=\vec d_0+\vec d_\tau+
\sum_{h\in N(v_*)}\sum_{t=0}^{\tau-1}\vec m_{h,t},
\qquad |\vec g|=2+\tau \nu.
\]
Here $H\vec g=L\vec g=0$. We split $U=\operatorname{supp}(\vec g)$ into two complementary explanations $a,b$, each containing one data endpoint and half of the measurement faults. Taking $\tau=1$ for even $\nu$ and $\tau=2$ for odd $\nu$ gives weight $\kappa(\nu)=1+\tau\nu/2$. The endpoint bits are $i=d_0$ and $j=d_\tau$: they refer to the same qubit at different times. The two explanations have equal prior probability even when $p\ne q$, and contribute equally to $W_{10}$ and $W_{01}$.

Higher-weight explanations modify this balance in two ways. Redistributing data faults on $v_*$, with compensating measurement strings, can produce endpoints $00$ or $11$ and hence change $\Delta$. For $\nu=3$, the first such pair uses either one data fault at the intermediate time or data faults at all three times, together with four or two measurement faults, respectively. A data fault beyond the endpoint pair gives an additional $00$ explanation. Moving the construction from the initial boundary into the interior allows this extension on both sides.

The leading $10,01$ extensions involve a pair of data faults at adjacent times on a neighboring degree-$\nu$ qubit. Together with compensating edges on that qubit's other checks, this pair replaces an occupied measurement edge. These extensions preserve the original endpoint values $10$ or $01$; unequal numbers of available extensions produce the leading contribution to $\delta$.

For $\nu=3$, label the neighboring checks $1,2,3$ and choose
\begin{equation}
\begin{aligned}
\vec a&=\vec d_0+\vec m_{1,0}+\vec m_{1,1}+\vec m_{2,1},\\
\vec b&=\vec d_2+\vec m_{2,0}+\vec m_{3,0}+\vec m_{3,1}.
\end{aligned}
\label{eq:methods-spacetime-halves}
\end{equation}
Their common syndrome activates checks $\{2,3\}$, $\{2\}$, and $\{1,2\}$ in measurement rounds $0,1,2$. More generally, for even $\nu=2k$, partition the neighboring checks into sets $\mathcal A,\mathcal B$ of size $k$, and take
\[
\vec a=\vec d_0+\sum_{h\in \mathcal A}\vec m_{h,0},\qquad
\vec b=\vec d_1+\sum_{h\in \mathcal B}\vec m_{h,0}.
\]
For odd $\nu=2k+1$, partition them into $\mathcal A,\{c\},\mathcal B$ with $|\mathcal A|=|\mathcal B|=k$, and use
\[
\begin{aligned}
\vec a&=\vec d_0+\sum_{h\in \mathcal A}(\vec m_{h,0}+\vec m_{h,1})+\vec m_{c,1},\\
\vec b&=\vec d_2+\sum_{h\in \mathcal B}(\vec m_{h,0}+\vec m_{h,1})+\vec m_{c,0}.
\end{aligned}
\]
For either degree parity, write $\vec s_*=H\vec a=H\vec b$. Summing the detector equations over time requires $e_i\oplus e_j=1$ for every compatible correction supported on $U$. Once the data endpoint is chosen, the measurement equations fix all remaining bits, so $a,b$ are the only compatible corrections on $U$.

The spatial assumptions in Theorem~\ref{thm:marginal-residual} control explanations extending outside $U$. The absence of Tanner 4-cycles limits distinct qubits to at most one shared check, while the bound on $\ker H_Z$ excludes spatial zero-syndrome patterns of weight below $2\nu+2$, including stabilizers. Together, these conditions establish that $a,b$ are the only minimum-weight explanations and allow the higher-weight contributions to $\Delta,\delta$ to be classified, as stated next and proved in Supplementary Information.

\begin{proposition}[Two-rate structure of the phenomenological witnesses]
\label{prop:methods-phenom-structure}
Under the spatial and measurement-round assumptions of Theorem~\ref{thm:marginal-residual}, the constructed $\vec a,\vec b$ are the only minimum-weight explanations of $\vec s_*$, with weight $\kappa=\kappa(\nu)$. Each contains one data fault and $\kappa-1$ measurement faults, so they remain equally likely for arbitrary $p,q$. Put $x=p/(1-p)$, $y=q/(1-q)$, and let $F_0=1+O(p)$ be the common no-error factor of the selected sector times the probability of zero syndrome in the other sector. Let $n_h$ count the other degree-$\nu$ qubits at check $h$, and put $N_A=\sum_{h\in\mathcal A}n_h$, $N_B=\sum_{h\in\mathcal B}n_h$.
For even $\nu$, the endpoint imbalances obey
\begin{equation}
\begin{aligned}
\Delta&=F_0xy^{3\nu/2}+O(p^{\kappa+\nu+1}),\\
\delta&=F_0(N_A-N_B)x^3y^{3\nu/2-2}\\
&\hspace{1em}+O(p^{\kappa+\nu+1}).
\end{aligned}
\label{eq:methods-ind-imbalance}
\end{equation}
For odd $\nu$, use two placements of the construction: $j=0$ starts in measurement round zero, while $j=1$ shifts every fault and detector by one measurement round. Both have the same minimum weight and prior composition. Their endpoint imbalances satisfy
\begin{equation}
\begin{aligned}
\Delta_j&=F_0\bigl[xy^{\nu-1}(y^2-x^2)\\
&\hspace{3em}+(1+j)xy^{2\nu}\bigr]+O(p^{2\nu+2}),\\
\delta_j&=2F_0(N_A-N_B)x^3y^{2\nu-2}\\
&\hspace{1em}+O(p^{2\nu+2}),\qquad j=0,1.
\end{aligned}
\label{eq:methods-two-rate-imbalance}
\end{equation}
\end{proposition}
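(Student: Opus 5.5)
The plan is to work entirely in the single selected sector ($Z$-check Tanner graph, $X$ faults plus measurement flips) and to factor the joint probability of any fault pattern as $F_0\,x^{|d|}y^{|m|}$. Here $|d|,|m|$ count data and measurement faults, and $F_0$ collects $\prod(1-p)\prod(1-q)$ together with the independent other sector, which only needs to produce zero syndrome. With this factorization, $W_{uv}$, $\Delta$ and $\delta$ become generating polynomials in $x,y$ over compatible error sets $\{e:He=s_*\}$, and the proposition reduces to a combinatorial classification of compatible errors up to a given weight. I would first enumerate all compatible $e$ as $e=a\oplus g$ with $g$ a zero-syndrome space--time pattern, $Hg=0$. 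The key structural step is to describe low-weight space--time kernel vectors of the phenomenological graph. I would project $g$ onto data faults summed over time, obtaining a spatial vector $\bar g$ with $H_Z\bar g=0$. By the kernel assumption, either $\bar g=0$ or $|\bar g|\ge 2\nu+2$, and the latter costs weight at least $2\nu+2$, which already exceeds every order we must resolve. So for $\bar g=0$, each qubit carries an even number of data faults in time. Each pair of time-separated data faults on a qubit of degree $\ge\nu$ forces measurement strings on all its checks, which yields the ``moving'' generators $\vec g$ of the displayed form.

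Next, I would show that $a,b$ are the unique minimum-weight explanations. Any compatible $e$ must flip the activated detectors of $s_*$. The no-4-cycle condition guarantees that two distinct qubits share at most one check, so an explanation using data faults on a qubit other than $v_*$ pays at least $\nu-1$ unshared measurement edges. Combined with the parity constraint $e_i\oplus e_j=1$ derived by summing detectors in time, a weight count gives $|e|\ge\kappa$, with equality only for $a,b$. This also confirms that each contains one data fault and $\kappa-1$ measurement faults, so they are equally likely for all $p,q$.

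The expansion then proceeds by grading compatible errors by excess weight. For $\Delta$ I need explanations whose endpoint bits are $00$ or $11$. These are obtained by redistributing the data fault on $v_*$: either it is moved to an intermediate or outside time, or extra data faults are added at other times with compensating measurement edges. I will enumerate these directly. For even $\nu$ with $\tau=1$, the leading $00$ term is a data fault at time $2$ (or $-1$ where allowed) joined by a full $\nu$-edge string, giving $xy^{3\nu/2}$; all $11$ terms are of higher order. For odd $\nu$, the intermediate-time fault yields the $00$ term $xy^{\nu+1}$ and the three-fault configuration yields the $11$ term $x^3y^{\nu-1}$, which accounts for $xy^{\nu-1}(y^2-x^2)$. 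The boundary at round $0$ versus the interior placement $j=1$ explains the factor $(1+j)$ multiplying $xy^{2\nu}$, since the extension is available on one side for $j=0$ and on both sides for $j=1$; here $r\ge5$ ensures room. For $\delta$, since $a$ and $b$ contribute equally, only asymmetric extensions matter. I would show the leading ones replace an occupied measurement edge on check $h$ by a data-fault pair at adjacent times on another degree-$\nu$ qubit $w$ at $h$, with $\nu-1$ compensating edges on $w$'s other checks. No 4-cycles ensures these checks are disjoint from $N(v_*)\setminus\{h\}$, so the change in weight is exactly the stated exponent shift. Counting such replacements per explanation gives $N_A$ versus $N_B$, and for odd $\nu$ each relevant edge appears twice (the rounds $0,1$ strings), producing the factor $2$.

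The main obstacle is the error-term bookkeeping: I must prove that every other compatible error contributes only at order $O(p^{\kappa+\nu+1})$, respectively $O(p^{2\nu+2})$. This includes combinations of two extensions, extensions touching qubits of degree larger than $\nu$, which I expect to cost strictly more, and spatial kernel patterns of weight $\ge 2\nu+2$. I would handle this by a case split on the number of qubits other than $v_*$ carrying data faults (zero, one, or at least two). In each case I would lower-bound the measurement cost using the at-most-one-shared-check property, and I would check that the sums defining $\Delta$ and $\delta$ do not cancel any of the exhibited leading terms.
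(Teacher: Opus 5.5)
Your plan follows the same route as the paper's proof. You factor each configuration as $F_0x^{|d|}y^{|m|}$, confine low-weight explanations using the time-summed projection and the $\ker H_Z$ bound, and classify compatible configurations through weight $\kappa+\nu$, keeping data and measurement counts separate. The leading terms you name are the ones the paper lists. Two small clarifications: for even $\nu$ only the round-2 fault exists at the boundary placement, which matches the coefficient $1$; and for odd $\nu$ the replacements on the central check $c$ occur once in each of $a,b$ and cancel.

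One step fails as written: the dismissal of $\bar g\neq 0$. Because $\bar a=\mathbf{1}_{v_*}$, the quantity that bounds $|e|$ is $\bar e=\bar g\oplus\mathbf{1}_{v_*}$. So $|\bar g|\ge 2\nu+2$ only gives $|e|\ge|\bar e|\ge 2\nu+1$, not $2\nu+2$.
\begin{itemize}
\item For even $\nu$ this is enough, since $2\nu+1\ge\kappa+\nu+1$.
\item For odd $\nu$ it is not: $\kappa+\nu=2\nu+1$ is exactly the order carrying the $(1+j)xy^{2\nu}$ and $\delta_j$ terms, so weight alone does not remove these configurations.
\end{itemize}
In the equality case, $e$ has no measurement faults and every qubit is faulted at most once. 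Also $v_*$ is unfaulted, since otherwise $|\bar g|=2\nu$. Such an $e$ would therefore add a term $x^{2\nu+1}$ to $W_{00}$. That term would be placement-independent, so $\Delta_1-\Delta_0$ and the theorem would survive, but the stated form of $\Delta_j$ would not.

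The missing argument is short.
\begin{itemize}
\item A pure-data explanation must reproduce each round of $s_*$ on its own.
\item Its slices outside the three active rounds must be empty, because a nonempty one would be a kernel vector of weight at least $2\nu+2$.
\item For a nonempty qubit set $S$ whose syndrome has $\sigma$ checks, put $\mu_h=|N(h)\cap S|$. The absence of 4-cycles gives $\binom{|S|}{2}\ge\sum_h\binom{\mu_h}{2}\ge(\nu|S|-\sigma)/2$, hence $|S|^2-(\nu+1)|S|+\sigma\ge 0$.
\item This quadratic is convex and equals $\sigma-\nu$ at both $|S|=1$ and $|S|=\nu$, so $|S|\ge\nu+1$ whenever $\sigma<\nu$.
\item The three active rounds of $s_*$ carry $(\nu+1)/2$, $1$ and $(\nu+1)/2$ checks. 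Such an explanation would thus need at least $3\nu+3>2\nu+1$ data faults, so this case is empty.
\end{itemize}
With this inserted, the rest of your bookkeeping can proceed as you describe. One caution for the case split over auxiliary qubits: your per-qubit count of $\nu-1$ unshared edges is not literally true when auxiliary qubits share a check with time-aligned pairs. There you need a joint count, and the $\ker H_Z$ bound applied to a single time slice, in place of the at-most-one-shared-check property alone.
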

\begin{proof}
In Supplementary Information, it classifies all compatible configurations through weight $\kappa+\nu$, retaining their separate data and measurement counts. The two minimum explanations have odds weight $xy^{\kappa-1}$. For even $\nu$, the next configurations comprise one $00$ explanation with one data fault and $3\nu/2$ measurement faults, and $N_A,N_B$ explanations in classes $10,01$, each with three data faults and $3\nu/2-2$ measurement faults. This gives Eq.~\eqref{eq:methods-ind-imbalance}.

For odd $\nu$, the weight-$(\nu+2)$ explanations in classes $00,11$ have odds weights $xy^{\nu+1}$ and $x^3y^{\nu-1}$. They need not cancel when $p\ne q$. At weight $2\nu+1$, a single data fault one measurement round beyond the endpoint pair gives a $00$ explanation with odds weight $xy^{2\nu}$. The boundary placement admits one such fault; the interior placement admits two, one on each side. The $10,01$ counts are unchanged by translation and give their stated difference. All omitted configurations have weight at least $2\nu+2$, so their total probability is $O(p^{2\nu+2})$ because $q=\Theta(p)$. The independent other sector contributes the same common factor to both placements.
\end{proof}

\begin{proof}[Proof of Theorem~\ref{thm:marginal-residual}]
Choose any minimum-degree qubit. For odd $\nu$, fix any central check $c$. Choose the remaining partition $\mathcal A,\mathcal B$ to minimize $|N_A-N_B|$ and orient it so that $\eta=N_A-N_B\geq0$.

If $\eta>0$, we show in Supplementary Information that it excludes both compatible outputs on $U$. The positive endpoint preference $\delta$ excludes $b$. For a check in $\mathcal A$ with the largest $n_h$, a negative measurement marginal bias (or a negative sum of two such biases for odd degree) excludes $a$. Posterior concentration rejects every bit outside $U$, so the boundary syndrome forces inconsistency.

If $\eta=0$ and $\nu$ is even, Eq.~\eqref{eq:methods-ind-imbalance} gives $\Delta=\Theta(p^{\kappa+\nu})>0$ and $\delta=O(p^{\kappa+\nu+1})$, again forcing inconsistency by Proposition~\ref{prop:methods-marginal-criterion}.

For $\eta=0$ and odd $\nu$, subtract the two expansions in Eq.~\eqref{eq:methods-two-rate-imbalance}:
\[
\Delta_1-\Delta_0=F_0xy^{2\nu}+O(p^{2\nu+2})
=\Theta(p^{2\nu+1}).
\]
Thus at least one placement satisfies $|\Delta_j|\geq|\Delta_1-\Delta_0|/2=\Omega(p^{2\nu+1})$, whereas both have $\delta_j=O(p^{2\nu+2})$. At each sufficiently small $p$, at least one of the two syndromes therefore forces equal MAP decisions on its data endpoints, violating their required odd parity. In every case, each candidate syndrome has probability
\[
\Pr(s_*^{(j)})=2F_0xy^{\kappa-1}+O(p^{\kappa+1})=\Theta(p^\kappa).
\]
Each case therefore supplies a syndrome of probability $\Theta(p^\kappa)$ on which MAP is inconsistent, proving $P_{\mathrm{inc}}(p)=\Omega(p^\kappa)$. In the balanced odd-degree case, the successful placement may depend on $p$, but the constants are uniform over the two choices. No limit or expansion for $q/p$ is needed.
\end{proof}

\begin{corollary}[Quartic inconsistency for the LP and Kasai code matrices]
\label{cor:methods-phenom-quartic}
For the LP code matrix with $n=1122$ and five noisy measurement rounds, and the Kasai code matrix with $n=1152$ and eight noisy measurement rounds, the independent $X/Z$ phenomenological model with data-fault probability $p$ and measurement-fault probability $q=\Theta(p)$ has $P_{\mathrm{inc}}(p)=\Omega(p^4)$ under exact marginal MAP.
\end{corollary}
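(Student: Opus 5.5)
The corollary will follow from Theorem~\ref{thm:marginal-residual} once its hypotheses are checked on the two concrete $Z$-check matrices; for $\nu=3$ the theorem gives $\kappa(3)=\nu+1=4$. The plan is therefore a finite verification followed by a one-line invocation. For each code (LP, $n=1122$; Kasai, $n=1152$) I will check four things. First, the minimum qubit (column) degree of $H_Z$ is exactly $\nu=3$. Second, the Tanner graph of $H_Z$ has no 4-cycles, i.e.\ no two distinct columns share two or more rows. Third, $\ker H_Z$ contains no nonzero vector of weight below $2\nu+2=8$. Fourth, the number of noisy rounds satisfies $r\geq5$: LP uses $r=5$ and Kasai uses $r=8$, so this holds directly.

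The first two checks are direct computations on the sparse matrix. I will compute column weights, and then test every pair of columns sharing a check for a second common check. Equivalently, I will verify that the off-diagonal entries of $H_Z^{\mathsf T}H_Z$, computed over the integers, are at most $1$. Both checks are linear or near-linear in $n$ because the matrices are sparse.

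The main obstacle is the kernel weight bound, since it concerns the minimum distance of the classical code $\ker H_Z$. A cheap argument is available only for small weights: with column degree at least $3$ and no 4-cycles, the Tanner graph has girth at least $6$, which already excludes kernel vectors of small weight. It does not directly reach $8$, so I will certify the bound computationally. I will enumerate all connected supports of size at most $7$ in the qubit graph, where two qubits are adjacent when they share a check. Any minimal kernel vector has a connected support in this graph, so this enumeration is enough. For each support I will check whether every check has even overlap with it. Each check meets a nonempty support at least once and must meet it at least twice, and sparsity keeps this search tractable. In practice the search will be a depth-first growth that only adds qubits adjacent to currently odd checks. Alternatively, an integer-programming minimum-weight solver can certify a lower bound of $8$.

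If the certificate confirms all four conditions, Theorem~\ref{thm:marginal-residual} applies verbatim. Its hypotheses also require independent $X/Z$ noise and $q=\Theta(p)$, both of which the corollary assumes, and the theorem then gives $P_{\mathrm{inc}}(p)=\Omega(p^4)$. If some minimum-degree qubit locally fails the kernel condition, the theorem's proof still only needs one good qubit $v_*$ with the right neighbourhood. As a fallback, I would then verify Proposition~\ref{prop:methods-phenom-structure}'s local classification around a chosen $v_*$ directly.
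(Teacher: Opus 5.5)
Your proposal is correct and follows the paper's route. The paper likewise certifies the hypotheses of Theorem~\ref{thm:marginal-residual} by finite checks in the Supplementary Information: minimum column degree $\nu=3$, no 4-cycles, the kernel-weight bound, and enough measurement rounds. It then reads off $\kappa(3)=4$. The paper additionally records the incidence counts $(4,4,4)$ and $(11,11,11)$, which give $N_A=N_B$ and so select the balanced odd-degree two-placement branch of the theorem's proof; this is not an extra hypothesis you need to check.
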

\begin{proof}
The spatial checks provided in the Supplementary Information verify the hypotheses of Theorem~\ref{thm:marginal-residual}. Both matrices have a minimum column degree $\nu=3$. At the selected qubit, the other degree-3 incidence counts are $(4,4,4)$ for LP and $(11,11,11)$ for Kasai, so $N_A=N_B$. The measurement-round counts admit both placements. Direct enumeration in Supplementary Information confirms the separate data/measurement counts for each placement.
\end{proof}
All asymptotic statements hold at fixed code and measurement-round count. The theorem concerns exact marginals in the independent $X/Z$ model. Approximate NeuralMP decoding and depolarizing simulations are assessed separately by their numerical results.

\subsection{Diversification}

After iterative GNN decoding, almost all remaining failures leave a nonzero residual syndrome; syndrome-consistent but incorrect corrections are comparatively very rare. Similar cases arise in belief propagation and are commonly addressed by generating alternative decoding attempts and keeping the best consistent one~\cite{wang2025fullyparallelized,Koutsioumpas2025AutDEC,Baldelli2026QuasiDyadic}. We apply this principle to the trained NeuralMP: for shots whose residual syndrome remains nonzero, the network infers correction candidates with modified versions of the decoding problem and selects the one with the highest confidence.

\textbf{Syndrome-flip (SF) search.} We hypothesize a small set of error mechanisms, indicated by \(t\), and decode the modified syndrome \(s'=s\oplus Ht\) with the same iterative NeuralMP, obtaining \(e'\). The candidate for the original problem is \(e=t\oplus e'\), which satisfies \(He=s\) whenever \(He'=s'\). Because the network re-decodes the modified syndrome, it can revise the whole correction rather than only the hypothesized bits. The hypotheses are guided by the network's confidence through two signals: \emph{absz}, which ranks mechanisms by the smallest absolute logit of the first iteration\(|z_{0,f}|\) (locally ambiguous decisions), and \emph{flip}, which ranks mechanisms by how often they are corrected across the iterations (decisions that are revised repeatedly, the neural analog of the oscillation signal of BP-SF~\cite{wang2025fullyparallelized}). For each signal we take the 32 top-ranked mechanisms and form 32 single, 32 pair and 32 triple hypotheses, giving 192 candidates per shot.

\textbf{Further diversification.} For LP20 Code, if all SF candidates have non-sero residual syndrome, a second stage decodes the syndrome in equivalent representations given by code automorphism (\(HP_g=D_gH\), \(P_gp=p\)), for which the network is not exactly equivariant (because of positional encoding) and therefore follows different trajectories, combined with random multiplicative perturbations of the prior inputs, \(\tilde p_f=\mathrm{clip}(p_fe^{\alpha\xi_f})\), \(\xi_f\sim\mathcal N(0,1)\). After decoding, candidates are mapped back to the original coordinates.

\textbf{Selection.} Among candidates with \(He=s\) we select \(e^\star=\arg\min\sum_f e_f\log\frac{1-p_f}{p_f}\) under the original noise model; if none is consistent, the output of the fourth iterative decoding is retained. No ground-truth information is used.

The search converts the dominant failure mode, non-convergence, into a recoverable one: for the Kasai code at \(p=1.5\%\) it resolves 99.6\% of the shots left unresolved by iterative decoding, reducing the unresolved fraction from \(2.6\times10^{-4}\) to \(1.0\times10^{-6}\) per shot. Since these search methods have similar patterns, with the help of neutral network we can make candidates decoding as parallel batch, its latency can be amortized.

\subsection{Quantum memory benchmarks}
\label{sec:methods-benchmarks}

For LP20 code, we simulate $r=5$ measurement rounds with independent $X$ and $Z$ data errors, each occurring with a probability $p$, and measurement errors occurring with a probability $q=p$. For Kasai, we use $r=8$ measurement rounds with depolarizing data errors of probability $p$ and measurement errors of probability $q=4p/3$.

For LP20 code, we set the hidden dimension of NeuralMP as $\chi=128$ and 12 MP-rounds; for Kasai code, the network has a hidden dimension of $\chi=64$ and 80 MP-rounds. Both use at most four outer iterations. Shots with a nonzero residual enter syndrome-flip search, which tests 192 iterative-decoding candidates constructed using uncertainty and repeated-correction signals. For LP20 code, shots without a syndrome-consistent candidate enter an additional stage of 396 prior-perturbed, symmetry-transformed candidates and 96 syndrome-flip candidates guided by symmetry-view disagreement. And for the Kasai code, \texttt{Triage} uses no second search stage. Candidates are checked against the original syndrome and ranked using the original noise priors.

For LP20 code, BP-OSD baseline uses 100 min-sum iterations with scaling 0.625 and order-10 combination-sweep OSD. Its relay-BP baseline uses 100 initial BP iterations and up to 300 relay stages of 60 iterations each. For Kasai code, the two-stage baseline uses 100 min-sum iterations followed by up to 300 relay stages of 60 iterations; Tiered adds a COPT-based search, with a limit of 120 seconds per side and eight solver threads.

The per-qubit-per-round LER is $[1-(1-\bar P)^{1/r}]/k$, where $\bar P$ is the mean of the block failure probabilities for the two logical sectors and $k$ is the number of logical qubits. Error bars combine 95\% Clopper--Pearson intervals with uncertainty in the sector when the failed sector was not recorded; treating sector counts as independent in the interval calculation is an approximation.

For \texttt{Triage}, the plotted amortized time uses a batch size of 64 and combines measured front-end and search costs with observed search frequencies, including handoff and final processing. The neural decoder was measured on one NVIDIA H-series GPU; the algorithmic baselines were measured on the CPU, with single-threaded BP and relay-BP and eight threads for the Tiered solver.

\section*{Acknowledgements}
 We acknowledge the insightful discussion with Yi-Zhuang You. H.Y.H. acknowledges support from a startup fund from IOP, CAS. which initiated the study. C.Z. acknowledges support from a startup fund from Eastern Institute of Technology, Ningbo. We gratefully acknowledge the support of the Pioneer Program at the Shanghai Artificial Intelligence Laboratory and thank program managers Xiaoyi Yu and Ting Wang for their support and assistance. H.Y.H. also thanks Shang Liu for the GPU support. Part of the computational work for this research was supported by the High Performance Computing Center at Eastern Institute of Technology, Ningbo. We thank Cardinal Operations for providing an academic license for COPT, which we used to implement the most-likely-error (MLE) decoder. We acknowledge the use of AI in this project and writing.

\textbf{Hardware specifications}: all latency measurements were performed on a single server. NeuralMP’s decoding time was measured on a single NVIDIA H-series GPU, with GPU clocks left at their default dynamic boost. The host has two Intel Xeon Platinum 8481C CPUs (56 cores each, 2.0 GHz base, up to 3.8 GHz turbo, turbo enabled). The BP-based baselines (min-sum BP, relay-BP and BP-OSD) were run single-threaded, with each process pinned
to its own physical core. The exact-solver stage of the Tiered decoder (COPT) was run with 8 threads per call.

\textbf{Competing interests:} All authors declare no competing interests.

\textbf{Data availability:} Source data are available for this paper. All other data supporting the plots within this paper and other study
findings are available from the corresponding author upon reasonable request.

\textbf{Code availability: } The code used in this study is available from the corresponding author upon reasonable request.

\emph{Note: We will release the fully supplementary information in the second update soon.}

\let\oldaddcontentsline\addcontentsline
\renewcommand{\addcontentsline}[3]{}
\bibliography{main.bib}
\let\addcontentsline\oldaddcontentsline








\end{document}